%% file: manuscript.tex
\documentclass[11pt]{article}
\usepackage[margin=1in]{geometry}
\usepackage{amsmath,amssymb,amsthm,graphicx,booktabs,tabularx}
\usepackage[T1]{fontenc}
\usepackage{lmodern}
\usepackage[colorlinks=true,allcolors=blue]{hyperref}
\usepackage{microtype}
\newtheorem{proposition}{Proposition}
\newtheorem{corollary}{Corollary}
\newcommand{\E}{\mathbb E}

\newcommand{\CI}{G_{\mathrm{CI}}}
\newcommand{\Iv}{\mathcal I_{\mathrm{vote}}}
\newcommand{\Rp}{R_{\mathrm{probe}}}
\title{When Does Communication Help? Beyond Spectral Descriptions of Collective Intelligence}
\author{Xuening Wu\\[0.4em]\small AI Engineering, Pfizer, Shanghai, 200041\\[0.2em]\small\href{mailto:xuening.wu@pfizer.com}{\texttt{xuening.wu@pfizer.com}}}
\date{}
\begin{document}
\maketitle
\begin{abstract}
Communication can bring agents into agreement while making their decisions worse. We identify two limits of aggregate descriptions of communication gain in distributed inference. First, stable linear systems with fixed evidence, network and readout can have interaction and finite-time state operators with identical eigenvalue and singular-value spectra, yet produce gains of opposite sign. Changing only message orientation raises accuracy from 72.6\% to 91.2\% or lowers it to 65.9\%. A standard task-projected local-response approximation retains the directional information missing from spectral summaries. Using labeled calibration data separate from the test set, it predicts multi-round gains in small trained nonlinear agents with a root-mean-square error of 0.45 percentage points on two synthetic tasks; tests with natural edge changes and handwritten digits extend the evaluation. Second, under community-shared bias, higher mean individual accuracy can coexist with harm to unaffected communities or lower global-vote accuracy. At fixed communication rounds, calibration constraints reduce observed community harm while retaining much of the mean benefit, but do not guarantee protection. Full direct calibration performs similarly. The results connect spectral insufficiency, task-aware prediction and the distribution of communication benefits, while leaving broad transfer and practical superiority open.
\end{abstract}
\noindent\textbf{Keywords:} collective intelligence; distributed inference; message passing; finite-time response; task alignment; calibration.

\section{Introduction}
Communication gives agents access to one another's evidence, but it also creates dependence between their errors. A group can therefore become more coordinated without becoming more accurate. Human experiments document both convergence without improved accuracy \cite{lorenz2011} and improvements in judgment under suitable network influence structures \cite{becker2017}. For learned agents, the corresponding question is practical: can we predict whether a particular communication process will improve performance on a particular task?

Three mechanisms organize this question. Local amplification governs how strongly messages change an agent's state. Network propagation governs how far those changes travel within a finite communication budget. Private-evidence diversity determines what complementary information is available. Yet a scalar measure of each mechanism may discard information essential to prediction: spectral radius omits direction, graph mixing omits transformations between feature channels, and a statistic of private evidence alone omits the path from that evidence to the task readout.

Our contribution is a controlled test of what a predictive description of communication gain must retain. First, an exact counterexample shows that complete eigenvalue and singular-value spectra can leave opposite gains indistinguishable. Second, we test whether a standard local-response approximation that retains the task projection predicts multi-round gains in trained nonlinear agents. Third, community-resolved experiments show that mean benefit can conceal local harm and test calibration constraints intended to limit that harm. The resulting account connects the information omitted by spectral summaries to prediction and to the distribution of communication benefits.

We measure mean individual task accuracy. This quantity can improve when communication distributes evidence already present in the group. Accordingly, ``collective intelligence gain'' denotes the measured effect of communication; it does not imply performance beyond centralized aggregation or the emergence of a new general capability.

\section{Related Work}

Agreement and accuracy are distinct in social-learning experiments: social influence can reduce diversity without improving accuracy \cite{lorenz2011}, while suitable networks can improve judgments \cite{becker2017}. Distributed detection theory relates network averaging speed to asymptotic error rates under fresh observations \cite{bajovic2011}. Our main experiments instead hold private evidence fixed and measure finite-round changes in mean individual accuracy. The qualitative importance of amplification, propagation and evidence diversity is established; the question here is what their summaries omit.

Transient neural coding and aligned or oblique recurrent dynamics establish the functional role of input, propagation and readout geometry \cite{bondanelli2020,schuessler2023}. Control-theoretic interpretation uses related input--output tools \cite{moon2025}, while graph neural network (GNN) studies analyze information bottlenecks \cite{black2023} and representation convergence \cite{epping2024}. Our contribution is a controlled construction with matched complete spectra and opposite communication gains, followed by evaluation of a standard local-response approximation for that endpoint, rather than a new principle of linearization or readout geometry.

Agent-system studies examine architecture-dependent returns to coordination \cite{kim2026}, and debate studies emphasize voting baselines and correction-biased updates \cite{choi2025,liu2026}. CAGE-CAL compares agent dependency graphs with and without communication to calibrate language-model panel confidence and select topologies \cite{huang2026cage}; BOUNDARY\_SYNC measures representational homogenization \cite{liu2026boundary}. False consensus and calibration-guided selection therefore overlap with prior work. We study communication-induced accuracy changes using private observations, white-box neural operators and target-condition labels, not a label-free law or a general advantage over aggregation.

Worst-group learning addresses limitations of average performance \cite{sagawa2019}, and Learn then Test provides finite-sample risk control through calibration and multiple testing \cite{angelopoulos2021}. Our groups are communicating node communities, and the tested constraints concern their accuracy changes relative to within-only communication. We evaluate standard constraints and approximate margins empirically; we do not derive a new risk-control theorem or a certified guarantee.

\section{Method}
\subsection{Problem formulation and candidate coordinates}
Each episode contains a target $Y$ and $N$ agents with private observations $o_i$. A shared encoder maps each observation to a private state. Agents exchange messages over a connected undirected graph with adjacency matrix $A$ and row-stochastic random-walk operator $P=D^{-1}A$, where $D$ is the degree matrix. For binary decisions $\widehat Y_i^{(T)}$ after $T$ rounds, define
\begin{equation}
 Q_T=\E\left[\frac{1}{N}\sum_{i=1}^N \mathbf 1\{\widehat Y_i^{(T)}=Y\}\right],\qquad
 \CI=Q_T-Q_0.
 \label{eq:gain}
\end{equation}
The primary baseline sets message coupling to zero while retaining the same trained parameters. This comparison isolates the inference-time effect of communication. It does not establish an advantage over a separately optimized isolated model. The data also include separately trained isolated models, global majority votes over private decisions, and averages of private prediction probabilities.

In row-state notation, the learned dynamics are
\begin{equation}
 h^0=\phi(b),\quad h^{t+1}=\phi(b+gPh^tW),\quad \ell_i^t=h_i^t v+c,
 \label{eq:neural}
\end{equation}
where $b$ is the fixed private anchor, $g$ controls message coupling, $W$ transforms message channels, and $v$ and $c$ define the task readout. For labels in $\{0,1\}$, the decision is $\widehat Y_i^t=\mathbf 1\{\ell_i^t>0\}$ and the predicted probability is the logistic transform of $\ell_i^t$. Execution uses local messages, while training uses a shared supervised objective. The update is a neural message-passing rule without Ising spin gates or an Ising energy.

We evaluate three prespecified operational coordinates. Let $\operatorname{vec}_{\mathrm{ag}}$ stack the agent rows into a column vector. With $\overline D=\operatorname{diag}(\operatorname{vec}_{\mathrm{ag}}\E_{\mathrm{cal}}[\phi'(b)])$, where the expectation averages calibration episodes, define
\begin{align}
 \Lambda&=g\rho\!\left(\overline D(P\otimes W^\top)\right),\label{eq:lambda}\\
 \mu_T&=1-\operatorname{SLEM}(P)^T,\label{eq:mu}\\
 \Iv&=Q_{\mathrm{cal}}(\text{majority of private decisions})-Q_{0,\mathrm{cal}}.\label{eq:info}
\end{align}
Here, $\rho$ denotes spectral radius and $\operatorname{SLEM}$ the second-largest eigenvalue modulus; tied votes receive half credit. The coordinates have distinct limits. $\Lambda$ measures reference susceptibility at private preactivations, rather than the Jacobian spectrum along the interacting trajectory. $\mu_T$ summarizes spectral exposure, rather than total-variation mixing or the full coupled dynamics. $\Iv$ measures a signed, label-dependent pooling gain, rather than mutual information. Its predictive value may therefore partly reflect direct measurement of an alternative aggregation strategy.

The definitions remain fixed throughout the transfer and response studies. Predictive surfaces $F$ are fitted using gains from their training folds only. The response approximation introduced below uses additional information; it is a separate predictor, not a revised definition of the original coordinates.

\subsection{A matched-spectrum counterexample}\label{sec:counterexample}
Consider two-channel observations, now in column-state notation,
\begin{equation}
 x_{i1}=\alpha Y+\epsilon_i,\qquad
 x_{i2}=\sigma\big(\sqrt r\,u+\sqrt{1-r}\,\eta_i\big),
 \label{eq:data}
\end{equation}
where $Y\in\{-1,1\}$ has equal priors, $\sigma>0$ and $r\in[0,1]$. The variables $\epsilon_i$, $u$ and $\eta_i$ are independent standard normals, all independent of $Y$. The second channel is therefore uninformative about the target. The private decision is $\operatorname{sign}(x_{i1})$, with accuracy $\Phi(\alpha)$, where $\Phi$ is the standard normal cumulative distribution function.

Let $h_i^0=x_i$ and
\begin{equation}
 h_i^{t+1}=x_i+g\sum_jP_{ij}W_\theta h_j^t,\qquad
 W_\theta=R_\theta\begin{pmatrix}a&0\\0&b\end{pmatrix}R_\theta^\top,
 \label{eq:linear}
\end{equation}
with $a=0.8$, $b=0.2$ and $R_\theta=\left(\begin{smallmatrix}\cos\theta&-\sin\theta\\\sin\theta&\cos\theta\end{smallmatrix}\right)$. The scalars $a,b$ here denote channel eigenvalues; the scalar $b$ is unrelated to the private anchor in Eq.~\eqref{eq:neural}. Decisions always read the first state coordinate. For $\theta\in[0,\pi/2]$, the channel operator is positive definite and entrywise nonnegative. With $0<g\leq1$, the recurrence is stable.

\begin{proposition}[Spectral matching does not determine gain]\label{prop:spectral}
Let $\alpha>0$, $0<r\leq1$ and $N\geq2$. Fix a connected undirected graph, the first-coordinate readout specified above, $0<g\leq1$ and an integer $T\geq1$. For each choice of noise parameters, hold the observations fixed across orientations. Changing $\theta$ in Eq.~\eqref{eq:linear} preserves all eigenvalues and singular values of both the one-step Jacobian and the finite-time state operator. Nevertheless, for sufficiently strong shared nuisance in Eq.~\eqref{eq:data}, the gains at $\theta=0$ and $\theta=\pi/4$ have opposite signs.
\end{proposition}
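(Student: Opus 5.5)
The plan is to rewrite the linear recurrence in stacked form and show that changing $\theta$ is an orthogonal similarity, which settles the spectral claims. The two gains are then computed in closed form as Gaussian tail probabilities. Stack the agent states as $h^t\in\mathbb R^{2N}$, so that Eq.~\eqref{eq:linear} reads $h^{t+1}=x+J_\theta h^t$ with one-step Jacobian
\[
J_\theta=g(P\otimes W_\theta).
\]
Unrolling gives the finite-time state operator $S_\theta=\sum_{k=0}^{T}J_\theta^k$, with $h^T=S_\theta x$. Set $U=I_N\otimes R_\theta$, which is orthogonal. Then $J_\theta=UJ_0U^\top$, and hence $S_\theta=US_0U^\top$, because the identity commutes with $U$. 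Orthogonal similarity preserves both eigenvalues and singular values, which gives the first claim. Stability comes from $\rho(J_\theta)=g\,\rho(P)\,\rho(W_\theta)=0.8g<1$.

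Next I would show that the gain at $\theta=0$ is strictly positive for every value of $\sigma$ and $r$. At $\theta=0$ the matrix $W_0=\operatorname{diag}(a,b)$ decouples the channels. The first coordinate then equals $h^T_{\cdot1}=Mx_{\cdot1}$ with
\[
M=\sum_{k=0}^T(gaP)^k .
\]
Because $P$ is row-stochastic, every row of $M$ sums to $c=\sum_{k=0}^T(ga)^k$. Therefore $h^T_{i1}=\alpha cY+m_i^\top\epsilon$, and the accuracy of agent $i$ is $\Phi(\alpha c/\|m_i\|_2)$. The row $m_i$ is nonnegative with $\ell_1$-norm $c$. It has at least two positive entries: the identity term contributes $e_i$, and the $k=1$ term contributes $gaP_{i\cdot}$, where $P_{i\cdot}$ places mass on a neighbour because the graph is connected and $N\ge2$. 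Hence $\|m_i\|_2<\|m_i\|_1=c$, so every agent's accuracy exceeds $\Phi(\alpha)$, and $Q_T-Q_0>0$. This value does not depend on $\sigma$.

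At $\theta=\pi/4$ the channel operator is
\[
W=\tfrac12\begin{pmatrix}a+b&a-b\\a-b&a+b\end{pmatrix},
\]
which feeds channel 2 into the readout. Write $\mathbf 1$ for the all-ones vector. Since $P\mathbf 1=\mathbf 1$, we have $J^k(\mathbf 1\otimes e)=\mathbf 1\otimes(gW)^k e$ for every channel vector $e$. Two consequences follow.
\begin{itemize}
\item The shared nuisance $u$ enters every agent's first coordinate with the common coefficient
\[
\sigma\sqrt r\,\kappa,\qquad \kappa=\sum_{k=1}^T g^k(W^k)_{12}=\sum_{k=1}^T g^k\frac{a^k-b^k}{2}>0 .
\]
This coefficient is positive because $T\ge1$, $a>b$ and $r>0$.
\item The signal enters with the common coefficient $\alpha s$, where $s=\sum_{k=0}^T g^k(W^k)_{11}$ is finite and independent of $\sigma$.
\end{itemize}
Conditional on $Y$, $h^T_{i1}$ is Gaussian with mean $\alpha sY$ and variance at least $\sigma^2r\kappa^2$, because $u$, $\epsilon$ and $\eta$ are independent. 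The accuracy of agent $i$ is therefore at most $\Phi\bigl(\alpha|s|/(\sigma\sqrt r\,\kappa)\bigr)$, which tends to $1/2<\Phi(\alpha)$ as $\sigma\to\infty$. Choosing $\sigma$ large enough makes the gain at $\pi/4$ negative, while the gain at $\theta=0$ stays positive.

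The step I expect to need the most care is identifying the exact coefficient of $u$ at $\theta=\pi/4$ and showing it is nonzero. This rests on two facts: $u$ is common to all agents, so $P$ acts on it as the identity, and $(W^k)_{12}=(a^k-b^k)/2$. A secondary check is that the $\theta=0$ gain is strict, which is where connectedness, $N\ge2$ and $T\ge1$ are used.
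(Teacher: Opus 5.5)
Your proposal is correct and follows essentially the same route as the paper's proof. It uses orthogonal similarity by $I_N\otimes R_\theta$ for the spectral claims, the $\ell_1$-versus-$\ell_2$ signal-to-noise bound on the rows of $\sum_{k=0}^T(gaP)^k$ at $\theta=0$, and a lower bound on the shared-nuisance variance at $\theta=\pi/4$ with the same coefficients $s=(f_a+f_b)/2$ and $\kappa=(f_a-f_b)/2$; the only difference is that you let $\sigma\to\infty$ where the paper states the explicit sufficient threshold $\sigma\sqrt r>(f_a+f_b)/(f_a-f_b)$.
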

The proof is given in Appendix~\ref{app:spectral-proof}. Its key observation is that $J_\theta=g(P\otimes W_\theta)$ and $S_T(\theta)=\sum_{k=0}^{T}J_\theta^k$ change by orthogonal similarity as the message orientation changes. Their eigenvalues and singular values are therefore preserved. With the input and readout fixed, however, the rotation changes how much shared nuisance reaches the decision. At zero degrees, the readout pools informative observations; at 45 degrees, sufficiently strong shared nuisance makes accuracy worse than the private baseline.

For the linear update, the activation derivative is the identity, so the reference susceptibility reduces to $\Lambda=g\rho(P\otimes W_\theta)=0.8g$. The other candidate coordinates are also unchanged: the graph and $T$ determine $\mu_T$, and the private decisions determine $\Iv$. In fact, the entire input distribution and its information about $Y$ remain fixed. Rotating only the interaction operator changes its orientation relative to the task. Rotating observations and readout together with the operator would instead be a passive change of basis and leave decisions unchanged; a numerical check verifies this distinction.

\begin{corollary}[Scalar prediction lower bound]
For two such systems with gain difference $\Delta$, any deterministic predictor based only on the three matched coordinates has maximum absolute error at least $|\Delta|/2$. The smallest attainable equal-weight two-condition RMSE among such predictors is $|\Delta|/2$.
\end{corollary}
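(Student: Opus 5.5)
The corollary follows from Proposition~\ref{prop:spectral} and an elementary minimax argument. Take the two systems from the proposition, for example orientations $\theta=0$ and $\theta=\pi/4$ with the same observations, graph, readout, $g$ and $T$. Write their gains as $G_1$ and $G_2$, so that $\Delta=G_1-G_2$.

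First I will check that the three coordinates agree across the two systems. The paper's discussion after the proposition gives this directly. We have $\Lambda=g\rho(P\otimes W_\theta)=0.8g$, because $W_\theta$ is orthogonally similar to $\operatorname{diag}(a,b)$ and $\rho(P)=1$ for the row-stochastic $P$. The coordinate $\mu_T$ depends only on $P$ and $T$. The coordinate $\Iv$ depends only on the private decisions $\operatorname{sign}(x_{i1})$ and on $Q_0$, and neither changes with $\theta$. So the two systems share one coordinate triple $z=(\Lambda,\mu_T,\Iv)$.

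Next, any deterministic predictor $F$ that uses only these coordinates must output the same value $p=F(z)$ for both systems. Its errors are $e_1=p-G_1$ and $e_2=p-G_2$, and these satisfy $e_1-e_2=G_2-G_1=-\Delta$. The triangle inequality gives $|e_1|+|e_2|\ge|\Delta|$, so $\max(|e_1|,|e_2|)\ge|\Delta|/2$. This is the first claim.

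For the RMSE claim, define
\[
\mathrm{RMSE}(p)=\sqrt{\tfrac12\bigl[(p-G_1)^2+(p-G_2)^2\bigr]}.
\]
The expression inside the root is a convex quadratic in $p$. It is minimized at $p=(G_1+G_2)/2$, where it equals $\tfrac12\bigl[2(\Delta/2)^2\bigr]=\Delta^2/4$, so the minimal RMSE is $|\Delta|/2$. Attainability needs some admissible predictor that takes this value at $z$. The constant predictor $F\equiv(G_1+G_2)/2$ qualifies, because it is deterministic and trivially a function of the coordinates. The same midpoint choice also gives maximum absolute error exactly $|\Delta|/2$, so the first bound is tight as well.

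The only delicate step is the coordinate-matching check: the statement is only meaningful if the predictor's inputs are truly identical across the two systems. I will point to the invariance arguments already given after Proposition~\ref{prop:spectral} rather than rederive them. I will also note that by that proposition $\Delta\neq0$, since the two gains have opposite signs, so the bound is nontrivial.
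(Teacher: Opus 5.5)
Your proof is correct and follows the paper's argument: matched coordinates force any deterministic predictor to give both systems the same output, and the midpoint of the two gains minimizes the two-condition squared error, which yields RMSE $|\Delta|/2$. Your triangle-inequality step for the maximum-error bound, your explicit check that $(\Lambda,\mu_T,\Iv)$ are invariant, and your attainability argument via the constant midpoint predictor spell out details that the paper's one-line justification leaves implicit.
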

Both systems receive the same prediction, and the midpoint of their gains minimizes squared error. The construction therefore establishes insufficiency of the specified summaries within this model class. The underlying orthogonal-similarity argument and Gaussian classification formulas are standard.

\subsection{Task-aware finite-time response}
For each calibration episode, let $H(h)$ denote the nonlinear update in Eq.~\eqref{eq:neural}. One message step from the private state gives the displacement and local derivative
\begin{equation}
 d=H(h^0)-h^0,\qquad
 J[\delta]=\phi'(b+gPh^0W)\odot(gP\delta W).
\end{equation}
The derivative $J=DH(h^0)$ describes the update's sensitivity to its input state and differs from the reference operator in Eq.~\eqref{eq:lambda}. Holding this derivative fixed gives the affine approximation
\begin{equation}
 \widetilde\delta^0=0,\qquad
 \widetilde\delta^{t+1}=d+J[\widetilde\delta^t],\qquad
 \widetilde\ell_i^T=\ell_i^0+\widetilde\delta_i^T v.
 \label{eq:response}
\end{equation}
Equivalently, $\widetilde\delta^T=\sum_{k=0}^{T-1}J^k[d]$. We define $\Rp$ as the empirical accuracy gain of these surrogate logits on labeled calibration data, independent of the final test set. No regression mapping response to gain is fitted, and the formula is unchanged across tasks and activation functions.

The approximation retains the initial displacement, its subsequent propagation and its projection onto the task readout. It is exact after one round and generally approximate thereafter. Because accuracy also depends on labels and decision margins, alignment alone does not guarantee a benefit. The predictor requires white-box parameters, local derivatives, calibration labels and one actual message round per candidate coupling and operator. It is therefore a local surrogate simulator with a larger information budget than the three scalar coordinates.

We use three ablations and a direct-calibration reference. The one-step baseline reuses the measured first-round calibration gain for every $T$. The readout ablation projects the propagated displacement onto an orthogonal hidden direction while preserving private logits. The zero-message approximation evaluates slopes at the private preactivation. Finally, full direct calibration executes every nonlinear candidate for all rounds on the same labeled samples, providing a reference for simply measuring candidate performance.

\subsection{Community-resolved communication effects}\label{sec:community-method}
For a partition into four equally sized communities $C_k$, let $P_{\rm in}$ average the preceding and following members within each community, and $P_{\rm out}$ average the same-ranked members of its two neighboring communities. Repeated neighbors are retained as message slots when a community has two members. Define
\begin{equation}
 P_\beta=(1-\beta)P_{\rm in}+\beta P_{\rm out},\qquad
 \Delta_k(\beta,T)=Q_{k,T}(P_\beta)-Q_{k,T}(P_{\rm in}),
 \label{eq:community}
\end{equation}
where $Q_{k,T}$ is mean individual accuracy within $C_k$. We use $\beta$ for the inter-community weight to distinguish it from the signal parameter $\alpha$ in Eq.~\eqref{eq:data}; result files call this weight \texttt{alpha}. At $\beta=0$ the communities are disconnected, extending the connected-graph setting above. The coordinate $\mu_T$ is not used to define the community selector. Unlike $\CI$, $\Delta_k$ compares inter-community communication with within-community communication at the same $T$. A separate endpoint computes a majority vote over all agents, with tied votes receiving half credit.

The follow-up models use either Eq.~\eqref{eq:neural} with Softplus, or the residual update
\begin{equation}
 h^0=\tanh(b),\qquad h^{t+1}=(1-g)h^t+g\tanh(b+Ph^tW).
 \label{eq:residual}
\end{equation}
For both, Eq.~\eqref{eq:response} uses the derivative of the implemented update. Changing activation and state retention together tests a second update form; it does not isolate either factor's causal contribution.

With target-condition calibration data, point-constrained selection chooses
\begin{equation}
 \widehat\beta=\arg\max_{\beta\in\{0,0.25,0.5,1\}}\widehat Q_T(P_\beta)
 \quad\text{subject to}\quad \min_k\widehat\Delta_k(\beta,T)\geq-\varepsilon.
 \label{eq:constraint}
\end{equation}
Predictions come from either local response or full nonlinear calibration on the same samples. The selector receives community membership and calibration outcomes, but neither test labels nor the identity of the corrupted communities. Within-only communication is always feasible; no-communication is not a candidate in this experiment. Ties select the first candidate in increasing $\beta$ order. A guarded variant replaces each estimated change by $\widehat\Delta_k-z\widehat{\rm SE}_k$, with $z=\Phi^{-1}(1-0.05/12)$ for the three nonbaseline candidates and four communities. Standard errors use paired per-episode community accuracies from the respective surrogate or nonlinear predictions. Normal approximation and response error preclude a finite-sample protection guarantee.

\section{Experiments}
\subsection{Experimental setup}
\subsubsection{Sequence and separation of evidence}
The studies were developed sequentially. A preliminary experiment with Gaussian private evidence and a binary task derived from $8\times8$ handwritten digits \cite{digits1998} motivated the linear counterexample. The counterexample then motivated the response experiment, which used new task generators and training seeds. Protocols, configurations and implementation hashes were recorded locally before each production run. These records document the sequence of development; they are not external preregistration, and earlier outcomes informed later study designs.

The preliminary study evaluated 5,184 conditions with a shared eight-channel Softplus network. Agents received disjoint pixel subsets of digits, and training, calibration and test sets used disjoint source-image identities. The conditions covered population sizes 8 and 16, three graph families, redundancy, misleading subpopulations and common-mode corruption. The reproducibility archive contains the full settings and supplementary predictions. We report task-transfer failures separately from pooled performance.

The linear study comprised 441 conditions. Its preselected primary condition used $N=12$, a degree-four ring, $g=1$, $T=4$, $\alpha=0.6$, $\sigma=4$ and $r=0.5$. This was a deliberate stress test chosen from the sufficient inequality in Eq.~\eqref{eq:sufficient}. The primary sweep used 131,072 paired episodes; replication blocks used 16,384 episodes each. Comparisons covered seven orientations and three graph families. Analytic Gaussian accuracies provided a separate check of the simulation implementation.

\subsubsection{Learned nonlinear response study}
The response study uses two synthetic tasks requiring nonlinear processing of private features, with labels $Y\in\{0,1\}$. In the XOR task, agent $i$ observes $(s_i+\epsilon_i,(2Y-1)s_i+\eta_i,n_{i1},n_{i2})$, where $s_i$ are independent equiprobable signs and the noise terms are independent standard Gaussian variables. In the signal task, each agent observes 16 samples of a sinusoid with two or three cycles per window, determined by $Y$, an independent random phase, and Gaussian noise of standard deviation 1.5.

We evaluate independent evidence and two controlled distribution shifts. Common nuisance adds a shared two-dimensional Gaussian vector, with independent components of standard deviation two, to the last two XOR channels, or a shared random-phase sinusoid of amplitude two to the signal task, with frequency independent of $Y$. The misleading condition flips the source label for the first 75\% of agent identities. Models train only on independent evidence. These shifts probe specific failure mechanisms rather than approximate deployment distributions.

The shared encoder has a 16-unit ReLU layer followed by an eight-channel affine layer. State activation $\phi$ is either Softplus or ELU. All encoder, message and readout parameters are trained using Adam for 800 minibatches of 64 episodes, with a learning rate of 0.003 and the message matrix spectral norm clipped to 0.75. Training uses $N=8$ Erd\H{o}s--R\'enyi graphs, $g\in\{0,0.35,0.7\}$ and $T\in\{1,2,3\}$. Three seeds give 12 learned models across task and activation combinations, plus matched independently trained no-communication models. These are two activations within a common update template, not two substantially different communication architectures.

Evaluation crosses $N\in\{8,16\}$, degree-four rings or connected Erd\H{o}s--R\'enyi graphs sampled with edge probability $4/(N-1)$ and conditioned on connectivity, three evidence scenarios, $g\in\{0,0.35,0.7,1.1\}$, $T\in\{1,3,6\}$ and five operator interventions. The calibration and test sets contain 2,048 and 4,096 independently generated episodes per block, respectively. A separate 1,024-episode sample defines the orientation basis. Evidence draws are shared across graph and activation comparisons where specified, preserving paired comparisons but limiting statistical independence.

The five interventions are the trained operator, rotations by 45 or 90 degrees, and scaled versions of the two rotations that match the trained operator's reference $\Lambda$. Rotations act in the span of the normalized task readout and a leading private-state variance direction orthogonal to it. This second direction is not guaranteed to be independent of the label. Raw rotations preserve the spectrum of $W$ but not the nonlinear Jacobian. Scaled rotations match only the reference radius; they can change message norms. The stronger invariance claim of the linear construction does not apply to them.

All 8,640 conditions are retained in the archive. The primary analysis uses the 4,320 conditions with $g>0$ and $T>1$, excluding zero-coupling cases and the exact one-step case. Fitted controls use histogram gradient boosting with fixed parameters and no hyperparameter search. For simultaneous task/activation holdout, training excludes both the held-out task and the held-out activation. The response predictor requires no fitted surface but does use calibration observations from the target condition. The transfer claim concerns the formula with target-condition calibration, not label-free prediction.

\subsubsection{Follow-up studies and data reuse}
The community studies use 12 models trained on XOR or binary handwritten digits, with anchored or residual updates and three new training seeds. At $N\in\{8,16\}$ and $g=0.7$, four communities exchange within- and between-community messages. One or two communities receive a half- or full-strength bias, alongside a clean control. XOR bias suppresses or reverses the second observation channel; image bias shifts observed pixels in the direction of the difference between the training positive- and negative-class means, without consulting test labels. These are directed stress tests. The bias study evaluates 1,440 configurations at $T\in\{1,3,6\}$; the constraint study uses three new partitions and fixes $T=3$ or $6$ separately, yielding 720 blocks and 2,880 candidates.

Both studies use 128 calibration episodes and 1,024 XOR test episodes or 450 digits test images. The primary constraint tolerance is $\varepsilon=0.01$. Four scheduled message slots per node per round are charged even at zero weight, so slots are matched only within each $(N,T)$; active edges differ. Later image studies reuse the same test identities, and XOR keeps evidence per agent fixed whereas digits keeps total pixels fixed. Natural edge interventions provide an additional response check. Appendix~\ref{app:followup} gives full training, corruption and sampling protocols, and Appendix~\ref{app:dynamic} describes a separate recovery stress test.

\subsubsection{Metrics and uncertainty}
We report the coefficient of determination ($R^2$), root-mean-square error (RMSE) and balanced accuracy for the sign of gain. Balanced accuracy averages recall for positive and negative gains. Sign evaluation excludes gains with absolute magnitude at most 0.01 in the response study; the preliminary study used its separately specified threshold of 0.005. Paired accuracy intervals use episodes as sampling units; sign balanced accuracy is left undefined when only one sign is present beyond the specified threshold. The parameter-grid rows are correlated observations of 12 trained response models, not independent replications. Descriptive bootstrap summaries resample model blocks; shared evidence across activations limits their interpretation beyond the tested tasks.

The response protocol specified two sets of success criteria. Predictive criteria required positive $R^2$ in all four task/activation groups, sign balanced accuracy of at least 0.75, and RMSE at least 20\% below the better of the three-coordinate and raw-control surfaces under simultaneous holdout. Mechanism criteria required RMSE at least 10\% below both the one-step and wrong-readout baselines. These criteria assess performance within the study and do not establish broad generality or practical superiority.

\subsection{Spectral insufficiency}
\subsubsection{Scalar transfer is condition-dependent}
Strong prediction within grouped conditions did not ensure task transfer. The preliminary three-coordinate surface achieves grouped-condition $R^2=0.9545$, but training on Gaussian evidence and testing on digits gives $R^2=-0.2702$; the reverse direction gives $R^2=0.5469$. Replacing $\Lambda$ with raw coupling slightly lowers pooled grouped-condition RMSE, from 1.418 to 1.395 percentage points. Prediction therefore does not transfer reliably in both directions, and the chosen amplification scalar shows no unique predictive advantage in this comparison. Because fitting error or differences in condition coverage could also explain this failure, we next turn to the exact counterexample.

\subsubsection{Identical spectra coexist with opposite gains}
In the preselected linear contrast, the same private decisions have accuracy 72.607\%. Interaction yields 91.180\% at zero degrees and 65.882\% at 45 degrees. Gains are $+18.572$ and $-6.725$ percentage points, with a paired difference of 25.297 percentage points (nominal 95\% confidence interval 25.080--25.514). All three coordinates are identical: $(\Lambda,\mu_T,\Iv)=(0.8,0.782372,0.221737)$, as are the complete one-step and finite-time eigenvalue and singular-value spectra. The third value is the calibration estimate defined in Eq.~\eqref{eq:info}; its analytic population counterpart is 0.221697. Both are unchanged across orientations.

The exact Gaussian gain difference is 0.253373. By the corollary, any deterministic function of the three coordinates has equal-weight RMSE of at least 12.669 percentage points on this pair, irrespective of fitting capacity. The preselected pair also has opposite gains in all 12 predeclared strong shared-nuisance replication combinations. These results establish that such failures are possible and that the summaries are insufficient; they do not estimate how often the failures occur naturally.

\begin{figure}[tb]
\centering\includegraphics[width=\linewidth]{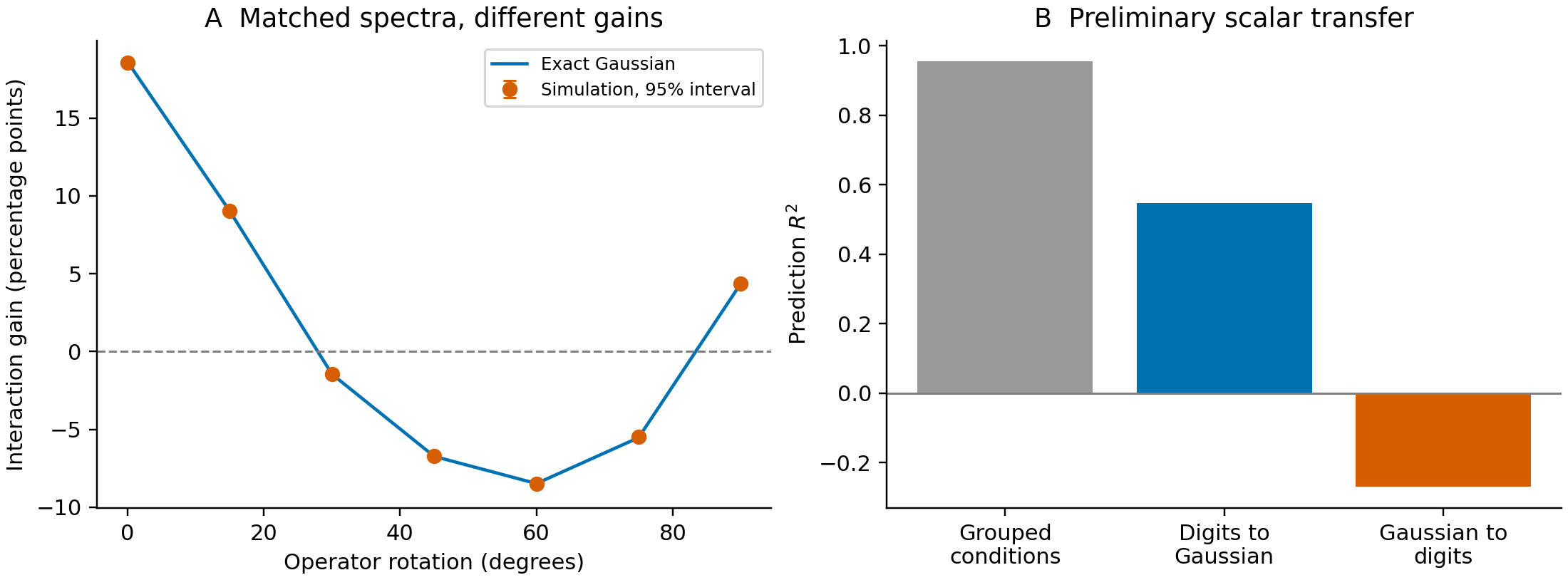}
\caption{Scalar insufficiency and preliminary task transfer. Left: primary linear gains across all predeclared orientations; points are Monte Carlo estimates and the line is the exact Gaussian calculation. The dashed line is zero gain. Graph, evidence, readout and full operator spectra are fixed. Right: preliminary three-coordinate prediction across two task-transfer directions. Grouped-condition prediction is shown separately because it is a different generalization test.}
\label{fig:linear}
\end{figure}

\subsection{Task-aware response predicts multi-round gains}
On the 4,320 primary neural conditions, $\Rp$ achieves $R^2=0.9988$ and RMSE 0.454 percentage points (Table~\ref{tab:prediction}; Fig.~\ref{fig:response}). The four task/activation $R^2$ values range from 0.9985 to 0.9989. Both sets of prespecified criteria are met. The three-coordinate fitted surface has $R^2=-0.1044$ under simultaneous task/activation holdout, while raw controls reach 0.8389. These comparisons demonstrate the value of additional response information in this setting; they do not compare equal-information predictors.

\begin{table}[tb]
\centering\small
\input{tables/prediction.tex}
\caption{Primary response study. RMSE is in percentage points; BA is sign balanced accuracy in percent. Fitted controls use simultaneous task/activation holdout. Unfitted predictors use the target-condition calibration set without fitting a response-to-gain surface. The predictors have different information access.}
\label{tab:prediction}
\end{table}

\begin{figure}[tb]
\centering\includegraphics[width=0.94\linewidth]{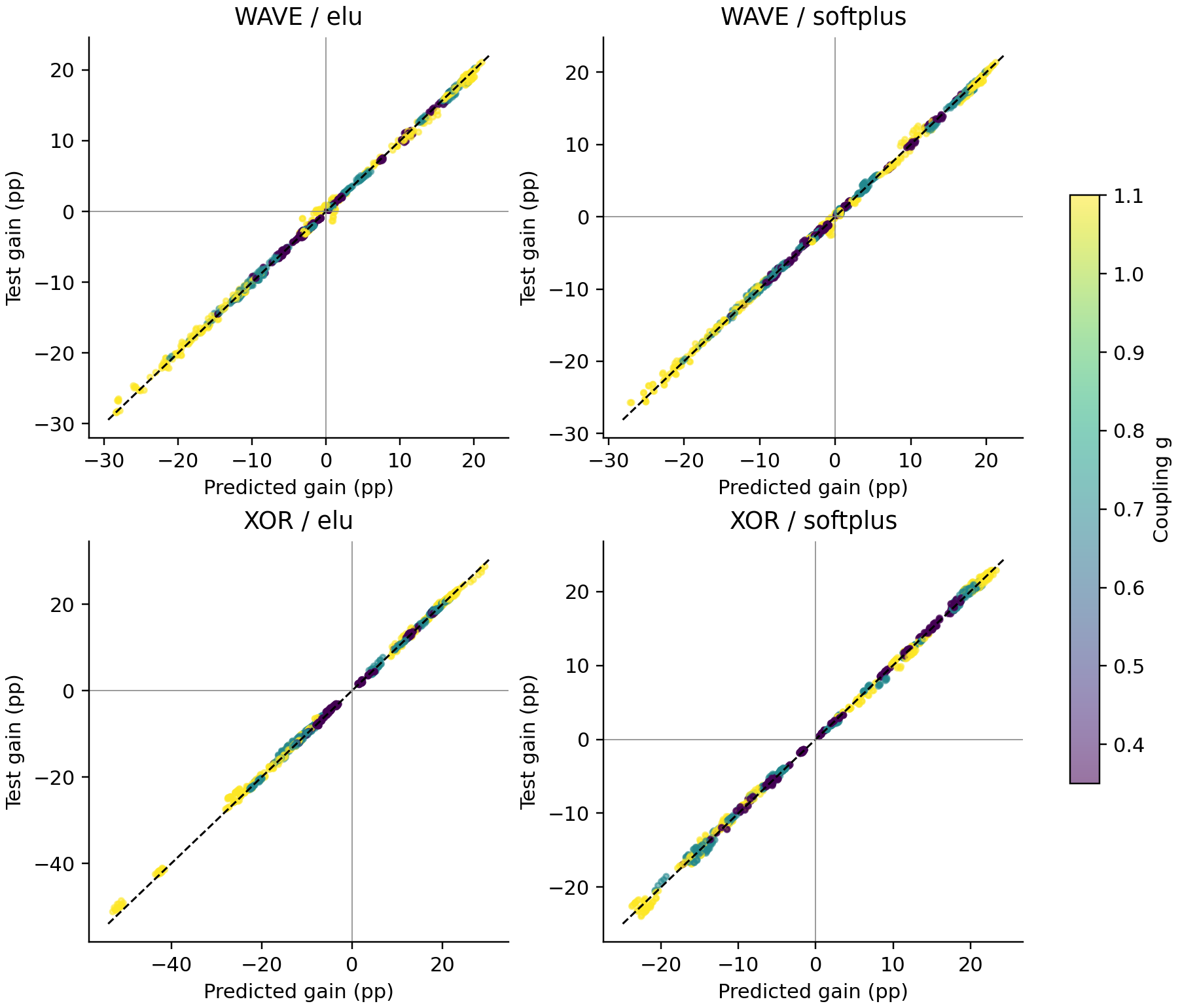}
\caption{Calibration response predictions versus independent-test gains, separately by task and activation. Each panel contains 1,080 primary conditions. Color denotes coupling and the diagonal denotes perfect prediction. Dense, related parameter sweeps should not be interpreted as independent task replications.}
\label{fig:response}
\end{figure}

One-step gain alone predicts signs well (99.61\% balanced accuracy), but has RMSE 4.499 percentage points. Propagating the response mainly improves magnitude prediction. The orthogonal-readout ablation has RMSE 16.798 points. The zero-message approximation also performs well (0.568 points), indicating that these systems broadly admit accurate local approximations rather than uniquely favoring one expansion point. Full direct calibration is slightly more accurate than $\Rp$ at 0.426 points.

Response RMSE increases from 0.278 percentage points at $g=0.35$ to 0.623 at $g=1.1$. Exploratory analyses conducted after evaluation also show high predictive accuracy for unrotated trained operators ($R^2=0.9994$) and within each evidence scenario. These analyses did not determine the predictor or its success criteria.

\subsubsection{Agreement and functional gain remain distinct}
Among 1,728 primary intervention pairs matched on reference radius, 811 have gains of opposite sign, with both nominal per-condition confidence intervals excluding zero. These dependent contrasts were not corrected for multiple comparisons. They indicate sensitivity to operator orientation, while providing weaker control than the exact spectral matching in the linear construction.

In 475 primary conditions, decision agreement exceeds 90\% while gain falls below $-1$ percentage point. This is a descriptive count within the tested grid, not a deployment frequency estimate. Agreement is the mean fraction of distinct agent pairs making the same binary decision, averaged over episodes; gain measures improvement over private decisions. Their separation is consistent with both the counterexample and prior social-learning evidence.

\paragraph{Natural-intervention check.}
On digits with edge deletion and coupling changes, response RMSE is 1.022 percentage points for anchored and 1.395 for residual updates, using 128 calibration labels and excluding $T=1$ from prediction evaluation. Selection includes all candidate round counts, including $T=1$, and abstention. Selected gains over private decisions are 7.272 and 7.306 points, versus 7.343 and 7.308 for direct calibration. Global private-probability averaging remains stronger than the selected mean individual readout, and the grid contains no gain below $-1$ point, limiting evaluation of harmful-communication detection. Observed timing advantages occur only for anchored updates. Appendix~\ref{app:followup} reports the complete comparison.
\subsection{Community benefits and harms}\label{sec:community-results}
\subsubsection{Mean benefit can conceal community harm}
Among 576 biased, multi-round comparisons of inter-community versus within-only communication, 189 reduce unaffected-community accuracy by more than one percentage point. In 74 comparisons, mean individual accuracy improves by more than one point while unaffected-community accuracy falls by more than one point. These are dependent grid counts, not estimates of a deployment event rate.

One illustrative slice uses one community exposed to full-strength bias, $\beta=0.5$, $T=6$, $N=16$ and anchored XOR agents. Averaged across three training seeds, affected-community accuracy rises by 19.98 points and unaffected-community accuracy falls by 2.29 points. Mean individual accuracy rises by 3.28 points, but global-majority accuracy falls by 1.25 points. This slice was selected after analysis for illustration. Figure~\ref{fig:community} instead displays all biased multi-round contrasts. The result shows why both the beneficiary and the readout must be specified when reporting communication gains.

\begin{figure}[tb]
\centering\includegraphics[width=\linewidth]{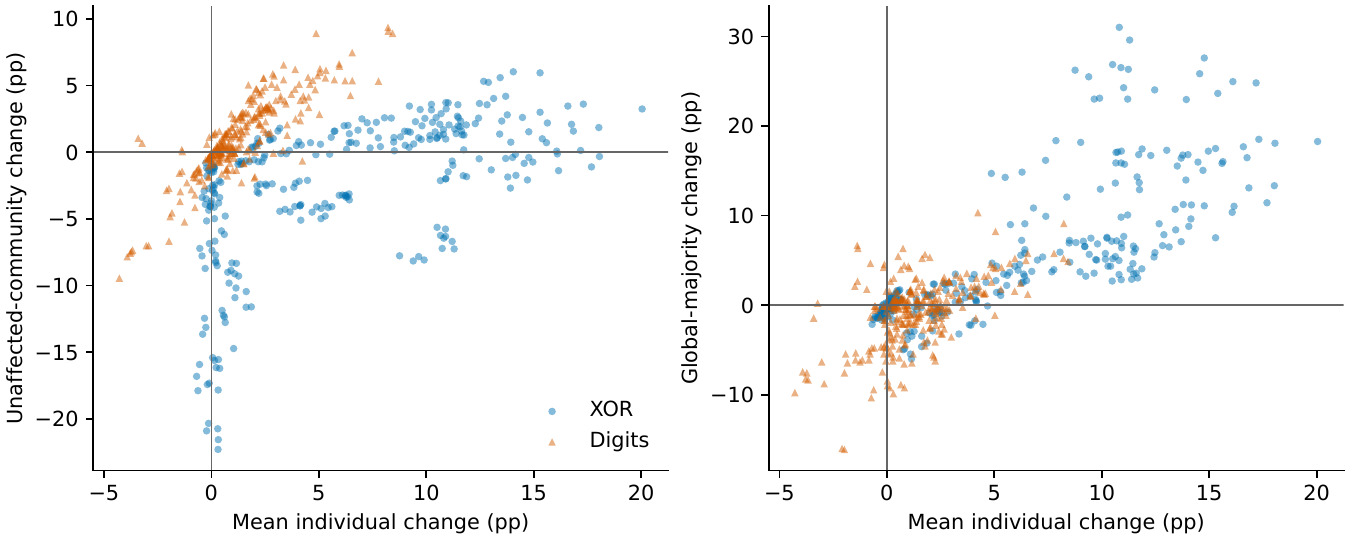}
\caption{Community and readout effects in all 576 biased comparisons with $T>1$. Each point compares a nonzero inter-community weight with within-only communication at the same $T$. Left: positive mean individual effects can coexist with negative effects on unaffected communities. Right: positive mean individual effects need not improve global-majority accuracy. Related conditions share models and evidence; points are not independent replications.}
\label{fig:community}
\end{figure}

On all multi-round cross-community contrasts, including clean controls, response RMSE for unaffected-community effects is 1.59 and 1.72 points for anchored and residual digits agents, and 1.01 and 1.21 points for XOR. Relative to no communication, nine digits configurations lose more than one point overall, with a minimum gain of $-2.67$ points; these all have $N=8$, two affected communities and $\beta=1$. This last comparison uses $\CI$, not the within-only reference in Eq.~\eqref{eq:community}.

\subsubsection{Calibration constraints reduce observed harm at a cost}
Table~\ref{tab:community-protection} reports the primary tolerance of one percentage point. All benefits in this comparison are relative to within-only communication at the same round count. A strict violation occurs if any community's test accuracy falls by more than the tolerance relative to that baseline. A secondary, prespecified material violation requires a loss exceeding two points. Point-constrained response selection reduces strict violations from 33.33\% to 17.78\% on digits and from 39.72\% to 4.44\% on XOR, while reducing mean benefits from 2.218 to 1.929 points and from 7.239 to 5.990 points, respectively. These percentages describe the evaluated blocks; they are not independent estimates of population-level risk.

\begin{table}[tb]
\centering\small
\input{tables/community_protection.tex}
\caption{Fixed-round community selection at $\varepsilon=1$ percentage point, averaged over 360 dependent blocks per task. Benefit is mean individual accuracy change relative to within-only communication, in percentage points (pp). Strict and material violation rates are percentages of blocks with any community losing more than 1 or 2 points, respectively. Direct calibration uses the same labels and candidate set. The guarded response's zero observed violations accompany frequent within-only fallback, not certified protection.}
\label{tab:community-protection}
\end{table}

Direct point-constrained calibration gives nearly identical mean benefits and slightly different violation rates. The constraint's value therefore cannot be attributed uniquely to response approximation. Guarded response selection has no observed strict violations, but chooses within-only communication in 97.5\% of digits blocks and 79.7\% of XOR blocks, leaving benefits of only 0.128 and 1.513 points. A fixed $\beta=0.5$ remains slightly better in mean XOR benefit than unconstrained response selection, while producing a higher fraction of blocks with a community violation. The observed tradeoff is between benefit and community protection, not uniform dominance by an adaptive selector.

\section{Discussion and limitations}

The results identify two kinds of information that aggregate descriptions omit. Spectrally matched operators can route signal and nuisance differently relative to a fixed task readout. A positive average gain can also conceal losses to particular communities or to a different group decision rule. Task-projected response addresses the first omission in the tested models; community-resolved evaluation exposes the second. The controlled counterexample, calibrated prediction tests and empirical benefit--harm tradeoff form the contribution. The underlying linearization, input--output geometry and constraint tools are standard.

The predictor requires white-box derivatives and target-condition labels, giving it more information than the scalar regressors. Direct calibration remains competitive, and timing advantages depend on the update form. Smooth, norm-constrained models with persistent private anchoring may favor local approximation. Only small update templates, three training seeds per study and one reused real-image split are tested. Directed corruption and operator rotation are stress tests; dependent grid counts and repeated partitions are not independent task replications. The observed individual gains do not establish superiority to global aggregation.

Community constraints reduce observed harm but leave a 17.78\% strict violation rate on digits at the primary tolerance. Guarded margins largely suppress inter-community communication, and zero observed violations are not certified protection. A separate streaming stress test found neither persistent error lock-in nor a general recovery advantage from reduced inter-community communication; local gating helped during XOR faults but performed similarly to exact feedback. Full results and a conditional contraction analysis appear in Appendix~\ref{app:dynamic}.

The community comparisons resolve individual, community and global outcomes, but do not establish a multiscale dynamical theory. Sizes of 8 and 16 cannot establish asymptotic scaling, especially when population size changes the evidence budget. A sign change in gain is not a phase transition. Further tests should connect response error to decision margins and evaluate new real tasks and distinct architectures against direct calibration at matched budgets. Formal protection would require valid finite-sample calibration with explicit treatment of surrogate error.

\section{Conclusion}
Matching complete operator spectra does not determine whether communication improves task accuracy. Retaining task-projected response gives accurate calibrated predictions in the small learned systems studied here, while community-resolved experiments show why positive mean gains need not benefit every community or readout. Calibration constraints reduce observed local harm at a cost in mean benefit, with direct calibration performing similarly. These results support a bounded account of communication gains that retains both task direction and the distribution of benefits; broad transfer, certified protection and practical superiority remain open.

\appendix
\section{Proof of spectral insufficiency}
\label{app:spectral-proof}
We prove Proposition~\ref{prop:spectral} under the assumptions stated in Section~\ref{sec:counterexample}.
\begin{proof}
The stacked Jacobian is $J_\theta=g(P\otimes W_\theta)$. With the orthogonal matrix $U_\theta=I_N\otimes R_\theta$,
\begin{equation}
 J_\theta=U_\theta J_0U_\theta^\top,\qquad
 S_T(\theta)=\sum_{k=0}^{T}J_\theta^k=U_\theta S_T(0)U_\theta^\top.
\end{equation}
Here, $h^T=S_T(\theta)x$. Orthogonal similarity preserves both eigenvalues and singular values, even when $P$ is not symmetric. For $\theta=0$, the first-channel readout is a nonnegative weighted sum of independent informative observations. Each row of $B=\sum_{k=0}^T(gaP)^k$ has at least two positive entries, so its signal-to-noise ratio is $\alpha\sum_jB_{ij}/\sqrt{\sum_jB_{ij}^2}>\alpha$ for $\alpha>0$.

Write $Q_T(\theta)$ for the mean accuracy after $T$ rounds at orientation $\theta$; $Q_0$ continues to denote accuracy without communication. For $\theta=\pi/4$, put $f_a=\sum_{k=0}^T(ga)^k$ and $f_b=\sum_{k=0}^T(gb)^k$. The readout mean conditioned on $Y=1$ is $\alpha(f_a+f_b)/2$, while shared nuisance alone contributes variance $\sigma^2r(f_a-f_b)^2/4$. Thus
\begin{equation}
 Q_T(\pi/4)\leq \Phi\!\left(\frac{\alpha(f_a+f_b)}{\sigma\sqrt r(f_a-f_b)}\right)<\Phi(\alpha)
 \quad\text{if}\quad
 \sigma\sqrt r>\frac{f_a+f_b}{f_a-f_b}.
 \label{eq:sufficient}
\end{equation}
Other noise contributions only decrease this upper bound. The condition is sufficient, not necessary.
\end{proof}
\section{Exact Gaussian reference}
Stack Eq.~\eqref{eq:data} as $x\mid Y=y\sim\mathcal N(ym,\Sigma)$, where $m=\alpha(\mathbf1_N\otimes e_1)$, $e_1=(1,0)^\top$, signal noise has covariance $I_N$ and nuisance covariance is $\sigma^2[(1-r)I_N+r\mathbf1\mathbf1^\top]$. The two channels have zero cross-covariance. If $C$ selects each agent's first coordinate and $A_T=CS_T$, the nondegenerate readout variance is $v_i=(A_T\Sigma A_T^\top)_{ii}$. Then
\begin{equation}
 Q_T=\frac1N\sum_i\Phi\left(\frac{(A_Tm)_i}{\sqrt{v_i}}\right).
\end{equation}
The calculation uses the known data generator and complete operator. Agreement with Monte Carlo checks the implementation and analytic construction, rather than prediction in an unknown system. At the primary setting, $f_a=3.3616$ and $f_b=1.2496$; the right side of Eq.~\eqref{eq:sufficient} is 2.18333, below $\sigma\sqrt r=2.82843$.

\section{Response ablations and proper-score decomposition}
The zero-message approximation uses $\widetilde h^{t+1}=h^0+\phi'(b)\odot(gP\widetilde h^tW)$, initialized at $h^0$. The readout ablation replaces $v$ in the displacement projection by $\|v\|w$, where $w$ is the calibrated orthogonal variance direction, while preserving the private logit. It tests this specific task projection and does not make $w$ label-independent.

For the prespecified secondary Brier endpoint, use $Y\in\{0,1\}$ and let $p_0$ and $\widetilde p_T$ denote the private and surrogate probabilities of $Y=1$, respectively, with $\Delta p=\widetilde p_T-p_0$. The surrogate Brier improvement satisfies the exact algebraic identity
\begin{equation}
 \E[(Y-p_0)^2-(Y-\widetilde p_T)^2]
 =2\E[(Y-p_0)\Delta p]-\E[(\Delta p)^2].
\end{equation}
The first term measures correction aligned with private prediction errors; the second penalizes the magnitude of the probability change. The identity is exact for surrogate probabilities. Their correspondence to the nonlinear model remains approximate, and the Brier endpoint is secondary to accuracy.

\section{Reproducibility and additional settings}
\label{app:repro}
Neural training and simulation use NumPy on CPUs; fitted predictive controls use scikit-learn. Finite differences checked gradients and one-step Jacobians for both activations. Additional checks verified zero gain at zero coupling and exact one-step proxy predictions. The maximum reference-radius matching error was $5.9\times10^{-15}$, and the Brier identity error was $1.1\times10^{-16}$. Retraining four selected task/activation models and their isolated counterparts reproduced bitwise-identical parameters in the recorded environment; regeneration also exactly matched 20 selected per-episode outcome arrays. These checks establish selective reproducibility, not independent replication of every condition.

The fitted regressors use 140 boosting iterations, learning rate 0.06, at most 15 leaves, minimum leaf size 20, L2 penalty 1 and no early stopping. Raw controls are $(g,T,N,\Iv,Q_{0,\mathrm{cal}},\theta,s)$, where $s$ is operator rescaling. One-probe controls use the one-step gain, $\Lambda$ and $\mu_T$. Training, basis selection, calibration and test generators have disjoint purpose-specific random seeds. Data reuse across topology and activation comparisons is intentional and recorded.

Nonlinear matching uses reference susceptibility at the private state. Let $\kappa_0$ and $\kappa_\theta$ denote its unit-coupling spectral radii for the base and rotated operators. The matched operator is $(\kappa_0/\kappa_\theta)W_\theta$, with no test gains used to determine the scale. This procedure preserves the reference radius, but not operator norm, the full Jacobian spectrum or the finite-time singular spectrum.

The preliminary neural experiment used 500 training minibatches, a single affine eight-channel private encoder followed by Softplus, and the same form of anchored messages. Its eight evidence conditions included independent and weak evidence, 50\%/100\% redundant views, 25\%/75\% misleading subpopulations, and 50\%/100\% common label shocks. Digits were divided into training/calibration/test source identities in proportions 50/25/25. A centralized learned reference saw all observed vectors but was not a mathematical accuracy upper bound. Details omitted from the main text, including the earlier spin-surface comparisons and baseline alignment supplements, remain in the research archive and are not used to claim successful universal transfer.

\section{Additional selection and intervention experiments}\label{app:followup}
\subsection{Selection in the original response study}
Within each of 144 task/activation/seed/size/graph/scenario blocks, we choose the candidate with highest predicted calibration gain, abstaining from communication if it does not exceed 0.005. Candidates include all operators, couplings and rounds; final test labels are not used for selection. Response-based selection gives mean test gain 14.631 percentage points, compared with 4.779 for the fixed base operator with $g=0.7$ and $T=3$, 10.646 for one-step selection and 14.636 for direct-calibration selection (Table~\ref{tab:selection}).

\begin{table}[tb]
\centering\small
\input{tables/selection.tex}
\caption{Communication selection over 144 condition blocks. Gain and regret are in percentage points (pp); regret is the difference between the test-oracle gain and the selected gain in the same block. Harm is the percentage of blocks with test gain below $-1$ percentage point. The test oracle selects the best candidate using test labels and serves only as a diagnostic upper reference. Different mean round counts preclude claims of equal-compute superiority.}
\label{tab:selection}
\end{table}

No response-selected block loses more than one percentage point of accuracy, although this observation provides no guarantee for new tasks. The response selector uses an average of 5.47 communication rounds, compared with 0.94 for one-step selection, and direct calibration remains slightly more accurate. Kernel timings exclude preprocessing and selection costs. The present results therefore establish neither equal-compute superiority nor an end-to-end efficiency advantage.

\subsection{Follow-up training, sampling and interventions}
The follow-up studies were designed sequentially after the response study. Twelve new models use XOR or the binary digits task, anchored or residual updates, and training seeds 411, 522 and 633. Encoders retain the 16-unit hidden layer and eight-channel state. Training uses 400 minibatches of 64, Adam with learning rate 0.003, the same message-norm limit of 0.75, $N=8$ Erd\H{o}s--R\'enyi graphs, coupling scheduled over $\{0,0.35,0.7,1\}$ and rounds alternating between $T=1$ and $T=3$. Digits use 898 training, 449 calibration and 450 test source images. An agent receives its disjoint flattened pixel subset plus a binary mask, represented by 128 inputs; the label is whether the digit is at least five. Pixel values are divided by 16. Total image evidence remains 64 pixels at both $N=8$ and $N=16$, whereas XOR keeps evidence per agent fixed.

The natural-intervention pilot evaluates rings and Erd\H{o}s--R\'enyi graphs, clean or shared-noise evidence, intact edges or independent edge deletion with probability 0.5, $g\in\{0.35,0.7,1\}$ and $T\in\{1,3,6\}$. Isolated nodes after deletion receive self-loops. It uses 32 or 128 calibration episodes and 512 XOR test episodes or all 450 test images. The 96 blocks contain 1,728 distinct candidate configurations; methods and label budgets reuse these candidates. Additional noise and timing settings appear below.

The community-bias study reuses these models at $g=0.7$, with four communities, $N\in\{8,16\}$, $T\in\{1,3,6\}$ and the four inter-community weights in Eq.~\eqref{eq:constraint}. One or two communities receive bias of strength $s\in\{0.5,1\}$, plus an unbiased control. XOR multiplies the second observation channel by $1-2s$. Digits add $4s$ times the difference between the training positive- and negative-class mean images to the affected observed pixels, without clipping. This directed stress test uses training labels to set the corruption direction, but does not query test labels. It is not representative random sensor noise. The study comprises 120 blocks and 1,440 configurations, using 128 calibration episodes and 1,024 XOR test episodes or 450 test images.

The constraint study uses three new community partitions, randomizes affected-community identities, and fixes $T=3$ or $T=6$ separately. Its 720 blocks contain 2,880 candidates. Tolerances $\varepsilon\in\{0,0.01,0.03\}$ are specified before the run, with 0.01 primary. Each block uses 128 calibration episodes and 1,024 new XOR test episodes or the same 450 digits test identities. Four scheduled message slots per agent per round are charged even at zero weight; active unique edges differ. Thus comparisons match scheduled slots within a fixed $(N,T)$, not physical communication across all settings. Later image studies reuse the test identities and cannot be treated as independent real-data replications.

Finally, a streaming stress test reuses the same 12 models without temporal retraining. Forty-eight model/size/partition blocks each contain four environment arms, six communication policies and 72 time steps. Targets stay fixed while private-observation noise is refreshed. A community fault operates during steps 24--47 between normal and recovery stages. Appendix~\ref{app:dynamic} specifies the local feedback controller, paired interventions and recovery endpoint. This study tests a further hypothesis about recovery; it is not additional evidence for a universal three-coordinate predictor.

\subsection{Natural-intervention results and computational controls}
With 128 calibration labels and one-step cases excluded from prediction evaluation, response RMSE on digits is 1.022 percentage points for anchored updates and 1.395 for residual updates. Selection includes the one-step candidates and abstention from communication. Response-selected gains over private decisions are 7.272 and 7.306 points; direct calibration gives 7.343 and 7.308. XOR response RMSE is 1.211 and 0.980 points, respectively. The candidate grid contains no gain below $-1$ point, so it cannot validate detection of harmful communication. Global averaging of private probabilities remains a strong reference: its accuracy is 70.685\% on digits and 96.354\% on XOR, compared with approximately 64.767\% and 90.546\% mean individual accuracy after response selection. These readouts have different information access. The extension supports calibrated prediction on images and natural edge interventions, not superiority to aggregation. Observed timing advantages occur only for the anchored update; the timing protocol and results are detailed below.

The natural-intervention models are trained independently of the earlier response-study models. Digits source identities are stratified using split seeds 731 and 732. Calibration subsets are drawn without replacement; training minibatches are drawn with replacement. Shared image noise adds an episode-specific Gaussian vector with standard deviation 0.6 before pixel masking. XOR shared noise uses the common-nuisance generator described in the main text. Static edge deletion retains each undirected edge independently with probability 0.5, followed by row normalization and self-loops for isolated nodes.

At 128 labels, the natural pilot's mean per-block ratios of response to direct runtime are 0.674 and 1.068 for anchored and residual digits models, and 0.689 and 1.059 for XOR. Timings are medians of three CPU calls, include encoding, candidate scoring and selection, and exclude training, data collection, graph construction and deployment. Fixed method order and millisecond durations limit interpretation. Selection gains average 12.715 points for response and 12.733 for direct calibration, compared with 9.849 for a fixed three-round setting; response selection uses 5.76 rounds on average. There is no equal-deployment-budget superiority claim.

The community constraint study randomizes timing order. Ratios of response to direct scoring time are approximately 0.72--0.73 for anchored models and 1.04--1.05 for residual models. Each scoring call supplies all tolerance and constraint variants, so these are not separate end-to-end deployment costs for individual policies. Source files retain the two additional tolerances, paired episode outcomes, seed summaries and strict as well as material violation indicators.

A post-run difference-in-differences diagnostic subtracts the clean inter-community effect from the biased effect on unaffected communities. Among 288 multi-round contrasts with one affected community, 202 show an additional loss exceeding one percentage point. This helps separate exposure to bias from the general reallocation of within- and between-community weights, but is a post hoc analysis of reused data, not independent confirmation of the mechanism.

\section{Streaming fault and recovery: results and protocol}\label{app:dynamic}
The streaming experiment does not support persistent error lock-in or a general recovery benefit from reducing inter-community communication. Under coherent faults, local response gating raises XOR fault-stage accuracy from 82.74\% under fixed $\beta=0.5$ to 84.46\%, but recovery is not faster. On digits, fault-stage improvement is only 0.17 points and recovery is slower. Exact one-step feedback performs almost identically. Fixed modular or intermittent policies generally reduce normal- and fault-stage accuracy; every coherent-fault block reaches its policy-specific mean-accuracy recovery criterion within the observation window.

Decorrelation with matched per-sensor fault counts produces negligible average changes, and resetting hidden states at recovery does not help. These interventions do not establish the proposed diversity--recovery mechanism. A post-run calculation explains part of the boundary: for identical private inputs and communication actions, state differences are bounded by 0.525 times the preceding difference for anchored models and 0.825 times that difference for residual models. This bound is not an accuracy guarantee and does not apply directly when feedback histories generate different actions. The protocol and all-policy summary below specify the scope of this calculation.

\subsection{Protocol and controls}

Two new partitions per model and size yield 48 blocks. Each stream has 24 normal, 24 fault and 24 recovery steps. XOR retains each target and latent private sign while drawing fresh unit-variance Gaussian noise; digits retain each source image and add independent pixel noise of standard deviation 0.15 before masking. The calibration stream has 128 targets, and test streams have 512 XOR targets or all 450 digits test images. Existing static-task weights are held fixed.

In the coherent arm, one randomly selected community receives a shared Bernoulli fault with probability 0.75 per target and time step. XOR reverses the second channel; digits apply the full-strength directed bias described in the main text. The clean arm supplies each policy's paired fault-free reference. The decorrelated arm independently permutes the coherent fault vector across targets for each affected sensor, preserving exactly its fault count at each step but potentially changing finite-sample pairing with class or difficulty. The reset arm matches the coherent arm through step 47 and, before the update at step 48, replaces hidden states with current private states; controller memory and pending feedback are retained. There is no additional clean-reset arm.

Policies use fixed $\beta=0.5$, fixed $\beta=0.125$, $\beta=0.5$ every fourth step and zero otherwise, or independently set each community's incoming inter-community weight to 0.5 with probability 0.25 and to zero otherwise. Two local policies compare within-only and $\beta=0.5$ shadow updates from the same previous calibration state. Each community receives its own sample feedback with probability 0.125, delayed three steps, and applies an exponentially weighted mean of the Brier improvement with rate 0.2. Gates start closed and open at an estimated improvement above 0.001. Test labels and fault identities do not enter the controller. Repeated feedback concerns the same 128 targets, so time-indexed feedback counts are not counts of newly labeled examples. Feedback is pooled within communities and gates are shared across the independent test trajectories.

The exact local policy computes both nonlinear shadow updates. The response policy instead linearizes the change in message weight at the within-only update:
\begin{equation}
 \widetilde h_{\rm cross}=h_{\rm in}
 +\tfrac12 g\,\phi'(a_{\rm in})\odot[(P_{\rm out}-P_{\rm in})hW],
\end{equation}
where $a_{\rm in}=b+gP_{\rm in}hW$ for anchored updates and $a_{\rm in}=b+P_{\rm in}hW$ for residual updates. This is a separately specified one-step message-direction approximation, not an unchanged application of the original finite-time response formula. All policies receive four scheduled message slots per node per step. Gating changes message use, not transmission count, and shadow evaluation adds computation.

Recovery lag is the first offset from step 48 at which three consecutive mean individual accuracies are no more than one percentage point below the same policy's clean reference; zero denotes immediate recovery and 24 denotes censoring. It does not require recovery of every community. Table~\ref{tab:dynamic} reports all policies under coherent faults. Across policies, decorrelation changes fault-stage accuracy by only $-0.049$ points on XOR and $-0.003$ on digits. Resetting lowers recovery-stage accuracy by 0.465 and 0.527 points, respectively. Without a clean-reset control, this does not isolate which components of retained state were useful.

\begin{table}[tb]
\centering\small
\input{tables/dynamic.tex}
\caption{Streaming coherent-fault results, averaged over 24 dependent blocks per task. Stage accuracies are percentages; lag is measured in steps relative to each policy's paired clean stream. Fixed 0.5 denotes the hierarchical graph with that inter-community weight, not an all-to-all graph. All blocks reach the stated mean-accuracy criterion within the recovery window.}
\label{tab:dynamic}
\end{table}

The post-run contraction calculation uses $\|W\|_2=0.75$ and $g=0.7$. For two trajectories with identical private inputs and identical row-stochastic communication actions, let $D_t$ be the maximum nodewise Euclidean state distance. Since Softplus and tanh are 1-Lipschitz, anchored updates satisfy $D_{t+1}\leq0.525D_t$, and residual updates satisfy $D_{t+1}\leq[(1-g)+g\|W\|_2]D_t=0.825D_t$. This bounds state differences conditional on common actions; accuracy also depends on decision margins, and feedback-controlled policies can select different actions after different histories.

\paragraph{Additional implementation checks.}
Follow-up checks cover gradients or directional derivatives, graph row sums, zero coupling, exact one-step response, image-identity separation and within-only isolation. Four representative blocks each in the natural, community-bias and constraint studies reproduced saved test arrays exactly. Dynamic checks verified matched fault counts, identical pre-reset trajectories and one complete regenerated trajectory batch. Flipping its test labels left gate actions unchanged. These are selective checks within the same research workflow, not an independent replication or a rerun of all model training.

\paragraph{AI assistance.}
AI tools assisted code development, mathematical exposition, analysis, literature search and manuscript drafting.

\bibliographystyle{plain}
\bibliography{references}
\end{document}

%% file: tables/prediction.tex
\begin{tabular}{lrrr}
\toprule
Predictor & $R^2$ & RMSE (pp) & Sign BA (\%) \\
\midrule
Task-aware response & 0.9988 & 0.454 & 99.95 \\
Zero-message approximation & 0.9981 & 0.568 & 99.95 \\
One-step gain & 0.8817 & 4.499 & 99.61 \\
Orthogonal readout & -0.6486 & 16.798 & 66.31 \\
Full direct calibration & 0.9989 & 0.426 & 99.95 \\
$\Lambda,\mu_T,\mathcal I_{\mathrm{vote}}$ & -0.1044 & 13.749 & 58.74 \\
Raw controls & 0.8389 & 5.252 & 99.56 \\
One-probe controls & 0.9178 & 3.750 & 95.38 \\
\bottomrule
\end{tabular}

%% file: tables/community_protection.tex
\begin{tabular}{llrrr}
\toprule
Task & Selector & Benefit (pp) & Strict (\%) & Material (\%) \\
\midrule
Digits & Fixed 0.5 & 1.963 & 46.39 & 28.06 \\
Digits & Response, unconstrained & 2.218 & 33.33 & 20.56 \\
Digits & Response, point constraint & 1.929 & 17.78 & 5.28 \\
Digits & Direct, point constraint & 1.925 & 17.22 & 4.44 \\
Digits & Response, guarded & 0.128 & 0.00 & 0.00 \\
\midrule
XOR & Fixed 0.5 & 7.317 & 46.11 & 37.78 \\
XOR & Response, unconstrained & 7.239 & 39.72 & 32.50 \\
XOR & Response, point constraint & 5.990 & 4.44 & 1.39 \\
XOR & Direct, point constraint & 5.974 & 4.72 & 1.67 \\
XOR & Response, guarded & 1.513 & 0.00 & 0.00 \\
\bottomrule
\end{tabular}

%% file: tables/selection.tex
\begin{tabular}{lrrrr}
\toprule
Selector & Gain (pp) & Regret (pp) & Harm (\%) & Rounds \\
\midrule
No communication & 0.000 & 14.670 & 0.0 & 0.00 \\
Fixed setting & 4.779 & 9.891 & 50.0 & 3.00 \\
One-step selection & 10.646 & 4.023 & 0.0 & 0.94 \\
Response selection & 14.631 & 0.039 & 0.0 & 5.47 \\
Direct calibration & 14.636 & 0.033 & 0.0 & 5.49 \\
Test oracle & 14.670 & 0.000 & 0.0 & 5.62 \\
\bottomrule
\end{tabular}

%% file: tables/dynamic.tex
\begin{tabular}{llrrrr}
\toprule
Task & Policy & Normal (\%) & Fault (\%) & Recovery (\%) & Lag \\
\midrule
XOR & Fixed 0.5 & 93.85 & 82.74 & 94.30 & 3.00 \\
XOR & Fixed 0.125 & 92.69 & 79.26 & 92.98 & 3.71 \\
XOR & Intermittent & 92.55 & 79.27 & 92.92 & 3.54 \\
XOR & Random gate & 92.51 & 78.98 & 92.91 & 3.12 \\
XOR & Local response & 93.43 & 84.46 & 94.31 & 3.21 \\
XOR & Local exact & 93.43 & 84.46 & 94.31 & 3.25 \\
\midrule
Digits & Fixed 0.5 & 68.14 & 65.81 & 68.72 & 1.46 \\
Digits & Fixed 0.125 & 66.34 & 64.09 & 66.75 & 1.46 \\
Digits & Intermittent & 66.42 & 64.15 & 66.87 & 1.42 \\
Digits & Random gate & 66.30 & 64.14 & 66.64 & 1.25 \\
Digits & Local response & 68.03 & 65.98 & 68.93 & 2.88 \\
Digits & Local exact & 68.04 & 65.98 & 68.93 & 2.96 \\
\bottomrule
\end{tabular}